\documentclass[runningheads]{llncs}

\usepackage{amsfonts}
\usepackage{booktabs}
\usepackage{siunitx}
\usepackage{multirow}
\pagenumbering{arabic}
\usepackage{graphicx}
\usepackage{listings}
\usepackage{commath}
\usepackage{epstopdf}
\usepackage{hyperref}
\usepackage{xcolor}
\usepackage{url}
\usepackage{mathtools}
\usepackage{alltt}
\usepackage{mathrsfs}
\usepackage{float}
\usepackage{subcaption}

\def\one{\mathds{1}}

\usepackage{graphicx,fancyvrb}
\usepackage[linesnumbered,ruled,vlined]{algorithm2e}
\usepackage{algpseudocode}
\usepackage{amsmath,amssymb}
\usepackage{booktabs}
\usepackage{caption}
\usepackage{float}
\usepackage{capt-of}
\usepackage{xcolor}

\usepackage{array}
\usepackage{arydshln}
\setlength\dashlinedash{0.2pt}
\setlength\dashlinegap{1.5pt}
\setlength\arrayrulewidth{0.3pt}

%
%


%

\definecolor{mygreen}{rgb}{0,0.6,0}
\definecolor{mygray}{rgb}{0.5,0.5,0.5}
\definecolor{mymauve}{rgb}{0.58,0,0.82}

\lstset{
	breaklines=true,                                     
	language=SQL,
	frame=ltrb,
	framesep=5pt,
	basicstyle=\normalsize,
	keywordstyle=\color{blue},
	identifierstyle=\ttfamily\color{mygreen}\bfseries,
	commentstyle=\color{Brown},
	stringstyle=\ttfamily,
	emph={count,sum,avg,/},
	emphstyle={\color{red}},
	showstringspaces=ture,
	classoffset=1, 
	otherkeywords={WITH, VIWE},
	keywordstyle=\color{weborange},
	classoffset=0,
}

\usepackage{listings}
\lstnewenvironment{VerbatimText}[1][]{
    
    \lstset{fancyvrb=true,basicstyle=\footnotesize,captionpos=b,xleftmargin=2em,#1}
}{}

\newcommand{\ate}{{\tt \textsc{ATE}}}

\newcommand{\E}{{\tt \mathbb{E}}}
\newcommand{\pr}{P}

\newcommand{\bigCI}{\mathrel{\text{\scalebox{1.07}{$\perp\mkern-10mu\perp$}}}}

\newcommand{\mc}[1]{\mathcal{#1}}

\newcommand{\ignore}[1]{}

\newcommand{\indep}{\mbox{$\perp\!\!\!\perp$}}



\newcommand{\att}{\mathbf{A}}
\newcommand{\sx}{\mathbf{x}}
\newcommand{\bx}{\mathbf{X}}

\newcommand*{\rom}[1]{\expandafter\@slowromancap\romannumeral #1@}
\newcommand{\RNum}[1]{\uppercase\expandafter{\romannumeral #1\relax}}

\newcommand{\mb}[1]{{\mathbf{#1}}}

\newcommand{\proj}[1]{{\Pi}}
\newcommand{\sel}[1]{{\sigma}}
\DeclarePairedDelimiterX{\infdivx}[2]{(}{)}{%
	#1\;\delimsize\|\;#2%
}

\newcommand{\cut}[1]{}
\newcommand{\eat}[1]{}

\usepackage[utf8]{inputenc}
\usepackage{booktabs}
\usepackage{color}
\usepackage{dsfont}

\usepackage{amsthm}

\author{Luke Rodriguez\inst{1} \and Babak Salimi \inst{2} \and
Haoyue Ping \inst{3} \and\\ Julia Stoyanovich \inst{3}\thanks{This work
was supported in part by NSF Grant No. 1741047.} \and Bill Howe \inst{1} }

\authorrunning{L. Rodriguez et al.}

\title{MobilityMirror: Bias-Adjusted \\
Transportation Datasets}

\institute{Information School, University of Washington, Seattle WA\\
\email{\{rodriglr,billhowe\}@uw.edu}\\ 
\and
Computer Science and Engineering, University of Washington, Seattle WA\\
\email{bsalimi@cs.washington.edu}\\ 
\and
College of Computing \& Informatics, Drexel University, Philadelphia PA \\
\email{\{hp354,stoyanovich\}@drexel.edu}}

\date{}

\begin{document}

\maketitle 
\begin{abstract}
We describe customized synthetic datasets for publishing mobility data. Companies are providing new transportation modalities, and their data is of high value for integrative transportation research, policy enforcement, and public accountability.  However, these companies are disincentivized from sharing data not only to protect the privacy of individuals (drivers and/or passengers), but also to protect their own competitive advantage.  Moreover, demographic biases arising from how the services are delivered may be amplified if released data is used in other contexts.

We describe a model and algorithm for releasing origin-destination histograms that removes selected biases in the data using causality-based methods.  We compute the origin-destination histogram of the original dataset then adjust the counts to remove undesirable causal relationships that can lead to discrimination or violate contractual obligations with data owners.  We evaluate the utility of the algorithm on real data from a dockless bike share program in Seattle and taxi data in New York, and show that these adjusted transportation datasets can retain utility while removing bias in the underlying data.
\end{abstract}

\section{Introduction}

Urban transportation continues to involve new modalities including rideshare \cite{shuo:15}, bike shares \cite{zhang2016expanding}, prediction apps for public transportation \cite{ferris2010onebusaway}, and routing apps for non-motorized traffic \cite{brock2018sig}.  These new services require sharing data between companies, universities, and city agencies to enforce permits, enable integrative models of demand and ridership, and ensure transparency.  But releasing data publicly via open data portals is untenable in many situations: corporate data is encumbered with contractual obligations to protect competitive advantage, datasets may exhibit biases that can reinforce discrimination~\cite{NBERw22776} or damage the accuracy of models trained using them~\cite{mcfarland2015big}, and all transportation data is inherently sensitive with respect to privacy~\cite{demontjoye2013unique}. To enable data sharing in these sensitive situations, we advocate releasing ``algorithmically adjusted'' datasets that \textit{destroy causal relationships between certain sensitive variables} while \textit{preserving relationships in all other cases}.  

For example, early deployments of transportation services may favor wealthy neighborhoods, inadvertently discriminating along racial lines due to the historical influence of segregation~\cite{amazonrace}.  Releasing data ``as is'' would complicate efforts to develop fair and accurate models of rider demand.  For example, card swipe data for public transportation use in Seattle is biased toward employees of tech companies and other large organizations, while other neighborhoods typically use cash. This bias correlates with race and income, potentially reinforcing social inequities. Additionally, the privacy concerns of releasing this kind of data in raw form require careful attention.

Our focus in this paper is to model how these biased effects manifest in the context of transportation and how to correct for them in the context of individual privacy. We will consider three applications: ride hailing services (using synthetic data), taxi services (using public open data), and dockless bike share services (using sensitive closed data). 

We focus on dockless bikeshare services as a running example.
The City of Seattle began a pilot program for dockless bikes in the Summer of 2017, issuing permits for three different companies to compete in the area (Company A, B, and C).  To ensure compliance with the permits, these three companies are required to share data through a third-party university service to enable integrative transportation applications while protecting privacy and ensuring equity.  As part of this project, the service produces synthetic datasets intended to balance the competing interests of utility, privacy, and equity. 
Figure \ref{fig:seattle_gender} shows a map of the ridership for the pilot program in Seattle and is indicative of the kind of data products generated for transparency and accountability reasons.

There are several potential \emph{sensitive causal dependencies} in these datasets:
\begin{itemize}
\item Company A may be moving their bikes into particular neighborhoods to encourage commutes; this strategy could be easily copied at the cost of competitive advantage.
\item Company B may be marketing to male riders through magazine ads, leading to a male bias in ridership that could be misinterpreted as demand.
\item Company C may be negotiating with the city for subsidies for rides in underserved neighborhoods; they may be disallowed from publicly disclosing information about these subsidies, and therefore wish to remove the relationship between company and demographics. 
\item Ride hailing and taxi services allow passengers to rate and tip the drivers; gender or racial patterns in tips or ratings may encourage discrimination by drivers and should be eliminated before attempting to develop economic models of tip revenue.
\end{itemize}

\begin{figure}
    \includegraphics[width=2.5in]{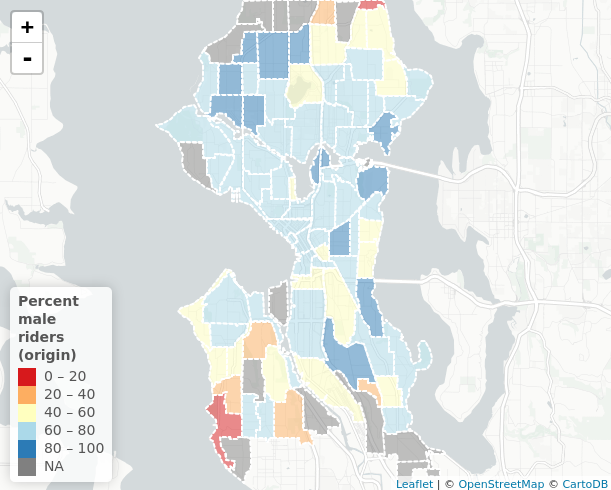}
	\includegraphics[width=2.5in]{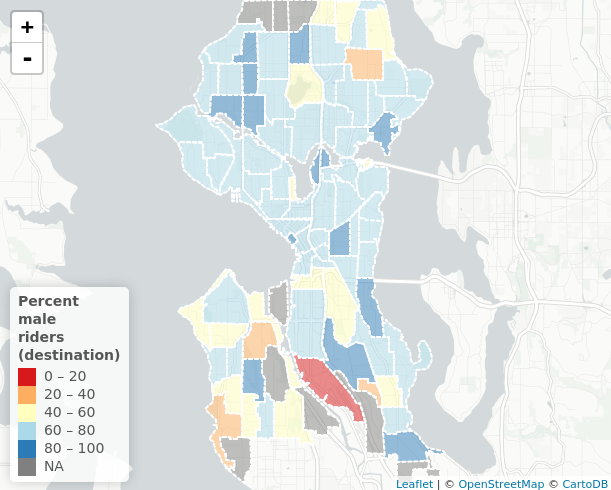}
    \caption{Percentage of bikeshare trips in Seattle with male riders by origin and destination neighborhoods}
    \label{fig:seattle_gender}
\end{figure}




In this paper, we develop an approach for adjusting transportation datasets to remove the effects of these sensitive causal relationships while preserving utility for classification and other analysis tasks. 

Transportation data is frequently released as an \emph{Origin-Destination} (OD) dataset: a set of location pairs representing city blocks, neighborhoods, or other spatial aggregation unit along with the traffic flow between the pair of locations.  We augment OD datasets with metadata, such that each tuple represents a histogram bucket corresponding to a unique combination of attributes. For example, the bike share data includes an attribute $gender$ with domain $(male, female, other)$ and an attribute $company$ with domain $(A, B, C)$ in addition to $origin$ and $destination$ attributes, each with a domain of 90 neighborhoods around Seattle.  A released dataset then might include the tuple $(female, A,$ $ Downtown, Ballard, 245)$ indicating that there were 245 trips taken by female riders on bikes owned by company B from Downtown to Ballard during the time period covered by the dataset.  These generalized OD datasets are sufficient for a variety of analytics tasks, including modeling demand, evaluating equity, estimating revenue, analyzing ridership trends, and estimating the effect on parking and motorized traffic.   

Although these datasets are aggregated, they can still expose sensitive information.  Individual privacy is an important concern in data sharing, but we do not focus on it here. In this work, we are interested in other forms of sensitive information encoded in the joint distribution across attributes.  To remove these sensitive patterns, the data publisher specifies a causal relationship between two attributes that they wish to eliminate in the adjusted dataset, conditioned on another set of attributes $Z$.  Then the causal repair problem is to set the mutual information between $X$ and $Y$ to zero, conditioned on $Z$.  
The conditional attributes $Z$ are important to express realistically complex situations; without these attributes, degenerate solutions such as scrambling or removing the $X$ or $Y$ attribute altogether would be sufficient.  

In our transportation context, our approach corresponds to computing a new distribution of trips over the buckets, one that preserves certain conditional joint probabilities while making other joint probabilities independent.
We apply this approach to two real-world datasets of interest: the NYC taxi trip record dataset \cite{tlc} and dockless bikeshare data from the city of Seattle.  The NYC taxi dataset is already available; we choose to evaluate on this dataset to ensure reproducibility. The bikeshare data is legally encumbered and cannot be shared publicly.

To evaluate the efficacy of our bias-reduction approach
we show that the distance between the original data and the adjusted data, as measured by multiple appropriate distance metrics, is no greater than would be expected due to sampling variance.  

Since our ultimate goal is to be able to release the datasets that we produce, we also investigate how we can adapt existing techniques from the differential privacy literature to work alongside our bias reduction approach. This requires that we carefully consider the domain of the data in order to produce results that have meaningful privacy guarantees. Moving beyond the theory, we also provide results showing how privacy preservation and bias reduction interact with each other in practice.

We make the following contributions:
\begin{itemize}
\item We describe the bias repair problem for transportation data, which arose from collaborations with companies and city agencies interested in sharing sensitive transportation data.
\item We describe a solution for removing a causal dependency (as defined by conditional mutual information) between two attributes in the context of transportation data. 
\item We evaluate this method on a synthetic rideshare dataset, a real taxi dataset, and a real bikeshare dataset, and demonstrate its effectiveness. 
\item We discuss generalizations of this approach to other domains, as we well as new potential algorithms to handle specific cases.
\item We evaluate how a carefully designed privacy-preserving algorithm that can be used in conjunction with our first solution to preserve utility while formalizing privacy protections.
\end{itemize}


The rest of this paper is organized as follows: in Section \ref{sec:rel} we describe related work in data sharing, causal analysis, and transportation. In Section \ref{sec:method} we present problem model and our proposed algorithm.  We describe taxi and bike sharing applications in In Section~\ref{sec:app}, and in Section \ref{sec:experiments} we evaluate the algorithm on real and synthetic data. We then extend these results in section Section \ref{sec:priv} by presenting an algorithm for preserving privacy along with an emperical evaluation of its performance. We conclude and discuss possible extensions in Section \ref{sec:conc}.

\section{Related Work}

\label{sec:rel}


Recent reports on data-driven decision making 
underscore that fairness and equitable treatment of individuals and groups is difficult to 
achieve~\cite{propublica,barocaseselbst,allegheny}, and that transparency and accountability of algorithmic processes are indispensable but rarely enacted~\cite{BrauneisGoodman,DBLP:journals/popets/DattaTD15,DBLP:journals/cacm/Sweeney13}.  
Our approach combines theoretical work relating causality to fairness~\cite{kilbertus2017avoiding} with practical tools for pre-processing data.

Recent research considers fairness, accountability and transparency properties of specific algorithms and their outputs. Dwork et al. articulated the fairness problem, emphasizing individual fairness (similar individuals receive similar outcomes), and Zemel et al. presented a method for learning fair representations based on this model that suppress discriminatory relationships while preserving other relationships~\cite{DBLP:conf/icml/ZemelWSPD13}. Feldman et al. provided a formalization of the legal concept of disparate impact~\cite{feldman:15}. Zliobaite presented a survey of 30+ fairness measures in the literature~\cite{DBLP:journals/datamine/Zliobaite17}.  However, these approaches are limited by the assumption that no information and no intervention methods are available for the upstream process that generated the input data~\cite{Kirkpatrick:2017:AD:3042068.3022181}.  Our focus is on developing a practical methodology that improves fairness for these upstream processes, specifically biased transportation data.

A common class of approaches to interrogate fairness and quantify discrimination is to use an associative (rather than a causal) relationship between a protected attribute  and an outcome.  One issue with these approaches is that they do not give intuitive results when the protected attribute exhibits spurious correlations with the outcome via a set of covariates.  For instance, in  1973, UC Berkeley was sued for discrimination against females in graduate school admissions,when it was found that 34.6\% of women were admitted in 1973 as opposed to 44.3\% of men.  
However, it turned out that women tended to apply to departments with lower overall acceptance rates; the admission rates for men and women when conditioned on department was approximately equal~\cite{salimi2018bias}. The data could therefore not be considered evidence for gender-based discrimination. 

The importance of causality in reasoning about discrimination is recognized in recent work.   Kusner articulated the link between counterfactual reasoning and fairness \cite{kusner2017counterfactual}.  Datta et al. introduce quantitative input influence measures that incorporate causality for algorithmic transparency to address correlated attributes \cite{DBLP:conf/sp/DattaSZ16}.  Galhotra et al. use a causal framework to develop a software testing framework for fairness~\cite{DBLP:conf/sigsoft/GalhotraBM17}. Kilbertus et al. formalize a causal framework for fairness that is closely related to ours, but do not present an implementation or experimental evaluation~\cite{kilbertus2017avoiding}. Nabi and Shpitser use causal pathways and counterfactuals to reason about discrimination, use causality to generalize previous proposals for fair inference, and propose an optimization problem that recomputes the joint distribution to minimize KL-divergence under bounded constraints on discrimination \cite{nabi2018fair}.  However, they do not provide an experimental evaluation, and do not propose an algorithm to eliminate causal relationships altogether.  No prior work uses these frameworks to generate synthetic data.  In our work, we focus on discrimination through total and direct effect of a sensitive attribute on an outcome. A comprehensive treatment of discrimination through causality requires reasoning about {\em path-specific causality} \cite{nabi2018fair}, which is difficult to measure in practice, and is the subject of our future work.

Prior work on publishing differentially private histograms, an intermediate step in generating synthetic data, was summarized by Meng et al \cite{meng2017different}.  All previous approaches assume a known, fixed domain, and look to improve utility over the  basic approach for numeric data proposed by Dwork et al. \cite{dwork2006calibrating}.

The first family of extensions are those that use hierarchical histogram structures. Xiao et al. propose a technique for using subcube histograms to improve accuracy in which the inputs are already binned into ranges \cite{xiao2012dpcube}. Similarly, the concept of universal histograms helps Hay et al. \cite{hay2010boosting} improve on the accuracy of Dwork's simple technique. Even so, they take as input a domain tree of unit intervals for the construction of the universal histogram. Building on the intuition that histograms depend heavily on bin choice, NoiseFirst and StructureFirst explicitly address both issues (Xu et al. \cite{xu2013differentially}).

More recently there have been a few generalized techniques proposed that are adaptable to the case of histogram publication. Privelet \cite{xiao2011differential} uses wavelet transforms and takes as input a frequency matrix of the counts to be approximated with noise proportional to the log of the number of such counts. Rastogi and Nath established the Fourier Perturbation Algorithm (FPA) \cite{rastogi2010differentially}, but in this framework the queries must explicitly define the domain before the algorithm is run. Building on this work, Acs et al. focus on the histogram problem and propose an extension of FPA called EFPA and a new algorithm P-HPartition as solutions \cite{acs2012differentially}. While both of these methods improve on the results of FPA for histogram publication, both explicitly take a ``true'' histogram as input complete with defined bins. Another more general method often extended to histogram publication is DPSense \cite{day2015differentially}. The authors present this algorithm explicitly as a method for the answering the query for a vector of column counts, which in themselves encode previous assumptions about the domain. 

Lu et al. consider generating synthetic data for testing untrusted systems \cite{lu:14}, but assume a matrix structure to the data that implies a known domain. Similarly, Xiao et al. consider synthetic data release through multidimensional partitioning, but use data cubes to explicitly map to a bounded n-dimensional space. There is also work that extends these ideas to correlated attributes and graphical models, most famously the PrivBayes algorithm \cite{chen14}. However, this also relies on a bounded domain in order to draw inferences about the correlation structure. The approach we adopt to account for large uncertain domain sizes is closely related to the sparse vector technique proposed by Cormode et al. \cite{cormode2011differentially}, which explicitly models elements beyond the active domain observed in the dataset.

\ignore{
As a simple example from \cite{pearl2001direct}, gender should not directly influence the hiring decision, but may influence the hiring decision indirectly, via  applicant qualifications.}

\section{Model and Algorithm}
\label{sec:method}

In this section, we model the bias repair problem, provide some background on causality, and present our solution.
We interpret the problem of removing bias from a dataset as eliminating a \emph{causal dependency} between a \emph{treatment attribute} $X$ and an \emph{outcome attribute} $Y$, assuming \emph{sufficient covariates} $\mb Z$. 

$X$ and $Y$ are conditionally independent given $\mb Z$ in $R$, written $( X \indep  Y| \mb Z)$, if

$$ \pr_{R}( X , Y, \mb Z )=\pr_{R}( X, \mb Z) \pr_{R}( Y| \mb Z)$$

The strength of a causal link between $X$ and $Y$ is measured by the conditional mutual information between $X$ and $Y$ given $\mb Z$ \cite{salimi2018bias}. It holds that $( X \indep  Y| \mb Z)$ iff the conditional mutual information between $X$ and $Y$ given $\mb Z$ is zero, written $I(X;Y| \mb Z)$.  To remove bias is to enforce $(X \!\perp\!\!\!\perp Y|  \mb Z)$ or, equivalently, to set the conditional mutual information between the treatment and the outcome given the sufficient covariates to zero.

Following an example from the introduction, we can consider the effect of bike share company on gender: one company may market more aggressively to women, or their bikes may be more difficult for men to ride.  This causal dependency warrants removal in various situations. For instance, the company may not want to reveal their marketing strategy, they may not want to be seen as propagating a gender bias, or a model trained on these results may be less generalizable to other cities if this bias is propagated.

\paragraph{\textbf{Problem Statement: Bias Repair}}
{\em Given a relation $R$ with a causal dependency $(X \not\!\perp\!\!\!\perp Y| \mb Z)$, and given a dissimilarity measure $\Delta$ between two probability distributions, the \emph{bias elimination problem} is to find $R'$ such that $(X \indep Y| \mb Z)$ while minimizing $\Delta(R,R')$.}

The dissimilarity measure $\Delta$ is interpreted as between  $\pr_{R}(\att)$ and $\pr_{R'}(\att)$ (e.g., KL-divergence). We will consider two different distance metrics in Section \ref{sec:metrics}: Position-weighted Kendall's Tau (which is rank-sensitive) and Hellinger distance (which is not). We defer a theoretical study of this optimization problem to our ongoing and future work, though we point out a connection to the problem of low-rank matrix approximation \cite{markovsky2011low}.  In this paper, we propose an algorithm that directly enforces the independence condition, then show experimentally that the effect on distance is small.

\subsection{Background on Causality}
\label{sec:method:causality}

We now briefly review causal inference, which forms the basis of our repair algorithm. 
%
The goal of causal inference is to estimate the effect of a {\em treatment} attribute $X$ on an {\em outcome} attribute $Y$ while accounting for the effects of covariate attributes $\mb Z$. We compute  a {\em potential outcome} $Y(x)$\cite{rubin2005causal}, which represents the outcome if, in a hypothetical intervention, the value of $X$ were set to value $x$.  The \emph{causal effect of $X$ on $Y$} is the expected value of the difference in the potential outcomes for two different values of $X$: $E[Y(x_1)-Y(x_0)]$, called the
{\em average treatment effect (ATE)}.

ATE can be computed if we can assume that a) missing attributes can be treated as having values that are effectively assigned at random (unconfoundedness/ignorability), and that b) it is possible to observe both positive and negative examples of $X$ in a relevant subset of the data (overlap).  These two conditions can be formalized as assuming a subset of attributes $\mb Z \subseteq \att$ is available such that: 
\begin{align*}
\forall \mb{z} \in &Dom(\mb Z),\\
&{Y(x_0), Y(x_1) \bigCI X} \; | \; \mb{Z}=\mb{z}   \quad &\text{(Unconfoundedness)}\\
& 0 < \Pr(X = x_1 \;  | \; \mb{Z}=\mb{z}) < 1 \quad &\text{(Overlap)} 
\end{align*}

If these conditions are met, ATE can be computed as follows:
%
\begin{eqnarray} \small
  \ate =  \sum_{\mb z \in Dom(\mb Z)} (\E[Y|X=x_1,\mb Z= \mb z]-\E[Y|X=x_0,\mb Z= \mb z]) \ \Pr(\mb Z= \mb z) \label{eq:adj}
\end{eqnarray}

\noindent where $Dom( \mb Z)$ is the domain of the attributes $ \mb Z$.

From this expression, it can be shown that the $\ate$ of $X$ on $Y$ is zero iff $I(X; Y| \mb Z)=0$. Therefore, we can use the conditional mutual information $I(X;Y| \mb Z)$ to quantify the strength of a causal link between $X$ and $Y$ given $\mb Z$. 

ATE quantifies the \emph{total} effect of $X$ on $Y$, which can be separated into direct effects and indirect effects (those that are mediated through other attributes).  In this paper, we ignore this distinction, and leave generalizing the method to account for this distinction to future work.

\subsection{Algorithm}
\label{sec:alg}
We propose a simple algorithm to compute an approximate solution to our problem. The algorithm is based on the intuition that $(X \indep Y| \mb Z)$ holds in $R'$ iff the joint probability distribution $\Pr_{R'}(\att)$ admits the following factorization, based on the chain rule:

\begin{align}
\pr_{R'}(\att)= \pr_{R'}(X \mb Z) \pr_{R'}(Y|\mb Z) \pr_{R'}(\mb U|XY \mb Z)  \label{eq:fac}
\end{align}

\noindent  where $\mb U = A-XY \mb Z$.  

This factorization will form the basis of our algorithm, but there is a complication: We want to restrict $R'$ to include only the active domain of $R$ rather than the full domain.  The reason is that transportation datasets are typically sparse; there are many combinations of attributes that do not correspond to any traffic (e.g., bike rides from the far North of the city to the far South).  We assume $R$ is a bag; it may contain duplicates.  For example, there may be multiple trips with the same origin, destination, and demographic information.  Under this semantics, we express our algorithm in terms of \emph{contingency tables}.

A contingency table over a set of attributes $\mb X \subseteq \att$, written $\mc{C}^{\mb X}_R$, is simply the count of the number of tuples for each unique value of $\sx \in Dom(\bx)$.  That is, $\mc{C}^{\mb X}_R$ corresponds to the result of the query \texttt{select $\mb X$, count(*) from $R$ group by $\mb X$}.  More formally, a contingency table over $\mb X \subseteq \att$ is a function $Dom(\mb X) \rightarrow \mathbb{N}$ 

$$\mc{C}^{\mb X}_R(\mb x)=\sum_{t \in R} \one[t[\mb X]= \mb x]$$
\noindent $t[\mb X]$ represents the tuple $t$ projected to the attributes $\mb X$, and $\one$ is the indicator function for the condition $t[\mb X]= \mb x$. 
The contingency table over all attributes in $R$ is an alternative representation for the bag $R$ itself: Given $\mc{C}^{\att }_R$, we can recover $R$ by iterating over $Dom(\att)$.  In practice, this step is not necessary, as $\mc{C}$ is implemented as a $k$-dimensional array.

Using contingency tables, we can compute a new joint probability distribution over $\att$ as
$$\pr_R(\att=a)=\frac{ \mc{C}^{\att}_R(a)}{|R|}$$


Algorithm \ref{algo:smi} uses these ideas to construct the desired relation $R'$ from the marginal frequencies of $R$, enforcing Equation \ref{eq:fac} by construction.  It can be shown that the KL-divergence between  $\pr_{R}(\att)$ and $\pr_{R'}(\att)$ is bounded by $I(X; Y|\mb Z)$. That is, the divergence of $R'$ from $R$ depends on the strength of the causal dependency between $X$ and $Y$.  If the causal dependency is weak, Algorithm \ref{algo:smi} will have no significant effect on the dataset.  We will evaluate the effects experimentally in Section \ref{sec:experiments}.

\begin{algorithm} 
	\DontPrintSemicolon
	\KwIn{ An instance $R$ with $\att= X Y \mb Z \mb U$ in which $(X \not\!\perp\!\!\!\perp Y| \mb Z)$}
	\KwOut{An instance  $R'$ in which   $(X \indep Y| \mb Z)$}
    
    \For{$x y \mb z \mb u \in R$} {
        $numerator \gets \mc{C}^{X \mb Z}_{R}(x \mb z) \mc{C}^{Y \mb Z}_{R}(y \mb z) \mc{C}^{X  Y \mb Z  \mb U}_{R}(x y \mb z \mb u)$\\
        $denominator \gets |R| \mc{C}^{\mb Z}_{R}(\mb z) \mc{C}^{X Y \mb Z}_{R}(x y \mb z)$\\
		$\mc{C}^{\att}_{R'}(xy \mb z \mb u) \gets \mb{Round}(\frac{numerator}{denominator})    $ \;
}

	$ \Return  \ \text{$R'$ associated with} \  \mc{C}^{\mb \att}_{R'}$
	
		\caption{Enforcing Conditional Independence} \label{algo:smi}
\end{algorithm}



\section{Applications and Datasets}
\label{sec:app}
\label{sec:taxi}

In this section we describe two real datasets to which we apply our methodology and an overview of how both datasets were processed for use in our evaluation.


\paragraph{NYC Taxi Data}
\label{sec:data:taxi}
The NYC taxi trip record dataset released by the Taxi \& Limousine Commission (TLC)~\cite{tlc} contains trips for 13,260 taxi drivers during January 2013, with pick-up and drop-off location as (lat,lon) coordinates and other information including trip distance and tip amount.  We used this particular release of the data because medallion numbers were no longer made available after this release.  
We first removed  transportation records with missing values, such as records with unknown pick-up or drop-off locations or missing tip amount. 
We then categorized trip distance into low, medium, and high, with about 1/3 of the trips falling into each category. Tip amount was categorized into low and high, with high tip corresponding to at least 20\% of the fare amount. Note that the original dataset has tip amount information only for rides that were paid by a credit card, and so we only consider these trips in the paper. Lastly, drivers were categorized into low, medium, and high frequency drivers. Table~\ref{tab:taxi_data} shows an example of the data after aggregation, with the count of each instance represented in the \texttt{count} column.

\begin{table}[htbp]
\centering
\caption{Processed NYC taxi data}
\label{tab:taxi_data}
\begin{tabular}{ccccccccc}
\toprule
o\_lon &  o\_lat &  d\_lon &  d\_lat & pickup & dist &   tip & freq & count\\
\midrule
 -74.0 & 40.7 & -74.0 & 40.7 & night & med & high & low & 6074 \\
 -74.0 & 40.7 & -74.0 & 40.7 & night & med &  low & low & 2844\\
 -73.9 & 40.7 & -73.9 & 40.7 & day   & low & high & med & 16 \\
 -73.9 & 40.7 & -73.9 & 40.7 & morn  & low & high & low & 14 \\
 -73.9 & 40.7 & -74.0 & 40.7 & morn  & low & high & high & 3 \\
\bottomrule
\end{tabular}
\end{table}

\paragraph{Dockless Bikeshare}
\label{sec:data:bike}
The bike data includes rides from $197,049$ distinct riders between June 2017 and May 2018 across three different companies. Each rider is identified via a unique rider id for each company, and the start and end location of each trip is projected to one of 94 neighborhoods in the Seattle area. 
Trip information is joined with rider information from survey responses, indicating their gender and whether or not they use a helmet.

\paragraph{Data Processing and Aggregation}
\label{sec:proc}
We pre-processed both datasets to make them compatible with our approach.  First, the time in both is precise up to the second. Since our model assumes categorical attributes, we map time to four buckets: morning (5am - 9am), day (9am - 3pm), evening (3pm - 7pm), and night (7pm - 5am).  Additionally, each individual driver/rider was classified into one of three categories by the number of trips they made, as recorded in the dataset. The top 1/3, who made the most trips, are designated \texttt{heavy}, the bottom 1/3 are designated \texttt{light}, and the rest are designated \texttt{medium}.

\section{Experiments}

\label{sec:experiments}
In this section, we first outline our evaluation metrics, and then present experiments to consider whether the error introduced by our bias-repair method is comparable to the error introduced by natural variation.  Recall that we wish to remove the causal dependency between $X$ and $Y$. If there is no correlation between these attributes, then the repair process will not change the weights significantly.  If there is a strong correlation, then the process will force the mutual information to zero while preserving the distribution of the other attributes.

We consider three  situations: synthetic data simulating extreme situations (Section \ref{sec:synthetic}), real datasets representing bike and taxi data (Section \ref{sec:realdata}), and the same real bike and taxi data, but aggregated post hoc to simple origin-destination pairs (Section \ref{sec:aggregated}).
The experiments in each of these situations can be summarized by the choice of treatment ($X$), outcome ($Y$) and covariate ($Z$) attributes, $X \rightarrow Y | Z$, as follows:

\begin{enumerate}
	\item Synthetic: $gender \rightarrow rating | \{origin,destination\}$
    \item Bike: $company \rightarrow gender | \{start\_nhood, end\_nhood,time\_of\_day,helmet\}$
    \item Taxi: $distance \rightarrow tip | \{orig\_lon, orig\_lat, dest\_lon, dest\_lat\}$
\end{enumerate}



\subsection{Evaluation Metrics}
\label{sec:metrics}
Our goal is to remove the effect of the given relationship without destroying the utility of the resulting dataset.  The proposed method would not be viable if it altered the distribution of traffic ``too much.''  To define ``too much,'' we a) compute the distance between the original dataset and the adjusted dataset, and b) compare this distance with the distances associated with a set of bootstrap samples of the original dataset.  If the distance with the adjusted dataset falls within the distribution of the bootstrap samples, we conclude that the adjustment is small enough to still produce a useful dataset.

To compute distances, we consider two different metrics: one that is rank-sensitive, and one that is not.  To measure rank-sensitive distance, we sort the buckets by trip count in descending order before and after the repair.  We then use position-weighted Kendall's tau \cite{DBLP:conf/www/KumarV10} to compare the two resulting rankings.  Kendall's tau counts the number of pair-wise position swaps between a ground truth ranking and an experimental ranking.  Position-weighted Kendall's tau incorporates a weighting function, usually to assign more importance to swaps that happen closer to the beginning of the ranked list.\footnote{Many methods for comparing ranked lists have been proposed.  We opt for a measure in which identity of the items being ranked (histogram buckets) is deemed important.  This is in contrast to typical IR measures such as NDCG or MAP, where item identity is disregarded, and only item quality or relevance scores are retained.}  This measure is appropriate in our domain, because a) transportation analysts and engineers are primarily interested in the conditions associated with the heaviest traffic flows, and b) transportation datasets are inherently sparse. 

The weighting function we consider is harmonic: Given position $i$ in a ranking, the weight is $\frac{1}{i}$.  We also considered an exponential weighting function, since traffic patterns tend to follow an exponential distribution, but that weighting function was potentially too generous to our method: The first few positions were all that mattered.

To measure distance independently of rank and position, we use Hellinger distance. This measure is an f-divergence closely related to the Bhattacharyya distance that obeys the triangle inequality, and is defined as follows: Let $p,q$ be two probability distributions over the same set of attributes $\bx$, and define the Bhattacharyya Coefficient $BC(p,q)$ to be $\sum_{x \in \bx}\sqrt{p(x)q(x)}$. Then the Hellinger distance is $H(p,q) = \sqrt{1-BC(p,q)}$.


\begin{table}
\centering
	\caption{Results of evaluation metrics across all experiments}
    \begin{tabular}{ c || c | c | c | c | c | c}
    \textbf{PWKT } & Synth. & Synth. & Bike & Bike & Taxi & Taxi \\
    & uncorrolated & correlated & all & grouped & all & grouped \\ \hline \hline
    2.5\% & 1.47 & 1.35 & 1.49 & 1.37 & 0.84 & 0.27 \\ \hline
    Mean & 2.93 & 2.44 & 2.34 & 2.88 & 1.39 & 0.81 \\ \hline
    97.5\% & 4.39 & 3.54 & 3.18 & 4.39 & 1.93 & 1.36 \\ \hline
    \textbf{Result} & \textbf{0.159} & \textbf{3.18} & \textbf{1.53} & \textbf{1.21} & \textbf{1.37} & \textbf{0.40} \\ \hline \hline
    \textbf{Hellinger} & Synth. & Synth. & Bike & Bike & Taxi & Taxi \\
    & uncorrolated & correlated & all & grouped & all & grouped \\ \hline \hline
    2.5\% & 0.075 & 0.072 & 0.084 & 0.14 & 0.029 & 0.044 \\ \hline
    Mean & 0.076 & 0.073 & 0.085 & 0.15 & 0.030 & 0.051 \\ \hline
    97.5\% & 0.076 & 0.074 & 0.086 & 0.15 & 0.030 & 0.047 \\ \hline
    \textbf{Result} & \textbf{0.00079} & \textbf{0.42} & \textbf{0.15} & \textbf{0.042} & \textbf{0.024} & \textbf{0.0020} \\\hline
	\end{tabular}
    \label{tab:results}
\end{table}

Table \ref{tab:results} presents results for both position-weighted Kendall's tau (PWKT) and Hellinger distance in each of our experiments.  The experimental result for Algorithm \ref{algo:smi} is in bold, and the other columns summarize the distribution of the bootstrap samples.  Figure \ref{fig:results} visualizes these results. Each experiment is represented by three bars. The light bar on the left shows the distribution of distances from the bootstrap procedure: the top of the bar represents the 97.5 percentile, the next line represents the mean, and lowest line represents the 2.5 percentile.  We visualize the distribution as a bar to emphasize that the measure is a distance, such that a lower bar is always better. The dark bar in the center is the experimental result. The final bar on the right represents a baseline test of assigning every trip a random $X$ value as a strategy of enforcing $I(X;Y)= 0$.

Overall, we can see that the error introduced by our algorithm is usually significantly less than the error one can expect from sampling, suggesting that the method is viable for correcting bias while retaining utility.

The expected variation is clearly visible for the case of PWKT, but for the Hellinger distance it is small compared to the magnitude of the metric, and is nearly impossible to distinguish precisely. The \emph{Correlated} and \emph{Bike} columns for the Hellinger distance stand out as significant outliers.

\begin{figure}

    \includegraphics[width=2.5in]{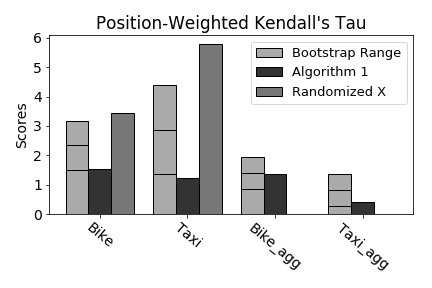}
	\includegraphics[width=2.5in]{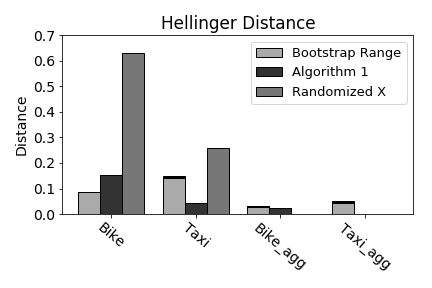}
    \caption{\label{fig:results}Expected bootstrap variation (left), experiment outcome (center), and baseline comparison (right) for each of the four experiments on real data. Lines in the bar denote the $2.5\%$, mean, and $97.5\%$ values for the distribution.}
\end{figure}

\subsection{Synthetic Ride Hailing Data}
\label{sec:synthetic}
For the synthetic experiments, the task is to remove the causal influence of gender on rating, simulating the situation where a data publisher does not want to unintentionally encourage discrimination \cite{NBERw22776}.

To generate the synthetic data, we use neighborhood-level trip data from the dockless bikeshare to simulate a realistic distribution of traffic among neighborhoods. Then, we assign each individual trip a gender at random from $\{m, f,o\}$ representing male, female, or other.
The \emph{no correlation} experiment assigns ratings according to a pre-defined distribution independent of gender, while the \emph{gender correlated} experiment uses three different distributions, one for each gender value, to simulate a strong correlation.  In both of these experiments, our simulated repair is to remove the effect of gender on rating, conditional on the origin and destination neighborhoods. 

We expect that the uncorrelated case should have minimal effect on the data, since there is no causal dependency to eliminate.  For the strongly correlated case, we expect the error to be significant.

Comparing the synthetic data experiments in Table \ref{tab:results}, we see that there is a change in the order of magnitude of the effect when the repair is acting on a relationship with a strong underlying correlation. When applied to synthetic data with no correlation structure, we find that values of both position-weighted Kendall's tau and Hellinger distance fall well below the range of error introduced by bootstrap sampling. However, in the correlated case when there was in fact a strong relationship, position-weighted Kendall's tau jumped to the upper extreme of the bootstrap range, and Hellinger distance far exceeded this range. This result indicates that the repair is causing a more drastic change in the gender correlated case than in the case with no correlation, as we would expect. 

Position-weighted Kendall's tau still falls within the bootstrap variation for the correlated case, which can be explained by the fact that certain neighborhood origin-destination pairs carry a disproportionate amount of the traffic in the dataset, so this relationship is preserved. The full magnitude of the change is better observed through the Hellinger distance in Table \ref{tab:results}, which grows an order of magnitude beyond the bootstrap variance in the gender correlated experiment.

\subsection{Real-World Bike and Taxi Datasets}
\label{sec:realdata}
In the bike experiment we remove the influence of company on gender using the dockless bikeshare data described in Section \ref{sec:app}. In this experiment, we are considering the situation where companies are releasing data to support traffic research, but do not want to expose any latent gender bias that may be attributable more to marketing efforts than to sexism.  The relationship between company and gender is conditional on origin, destination, and whether or not the rider uses a helmet.  In other words, \emph{only the effect of company should be removed, not the overall pattern of gender on ridership.}

In the taxi experiment we investigate the effect of a repair on the taxi data from Section \ref{sec:app}, in which we remove the influence of distance on tip amount, conditional on origin and destination.  The situation we consider is a behavioral economic analysis of tipping patterns, but we want to completely remove the influence of distance. Simply normalizing by distance is not enough, as the joint distribution between, say, time of day, distance, and tip amount can be complex.  Moreover, certain neighborhood origins and destinations may generate higher tips or lower tips in ways that interact with distance traveled.  For example, long east-west trips at certain times may be relatively short, but generate higher tips.

In both cases, we see that the calculated position-weighted Kendall's tau and Hellinger distance in Table \ref{tab:results} fall close to the expected variation from bootstrap samples, with the exception of the Hellinger distance for the bike share data, which is about twice this baseline. This anomaly helped us discover a data ingest error upstream from our algorithm: gender information was only properly included for one company, while the other two had two different default values. As a result, there was an unrealistically high correlation between company and gender. The order of values was still largely preserved by Algorithm \ref{algo:smi}, as seen in Figure \ref{fig:results}, since there are significantly more trips from one company than from the others, but the structural change results in a high Hellinger distance. Taken along with the taxi data, this reaffirms that Algorithm \ref{algo:smi} behaves as expected: it induces larger changes when there is a high degree of correlation in the relationship chosen for treatment.

\subsection{Aggregated Origin-Destination Data}
\label{sec:aggregated}

In our experiments so far we considered all possible fine-grained buckets in the dataset. For example, the trip count associated with \texttt{\{UDistrict, Downtown, Female, Helmet, Morning\}} appears as a bucket.  We also consider a coarser aggregated view of this data, grouping buckets by origin and destination and aggregating over gender, helmet, and time.  The motivation is that, in many situations, only origin-destination counts are important, and also that our method may unfairly benefit on a fine-grained dataset: if we preserve the distribution of the top few origin-destination pairs, we will also preserve the distribution of a large number of finer-grained buckets that divide these origin-destination pairs by gender, helmet and time. We run the same experiments and metrics as before, but this time grouping by origin and destination.

When aggregating as described, we see in Figure \ref{fig:results} that the baseline (right column) for each of these experiments has a value of $0$. This is because origin and destination were not included in $X$ or $Y$, and any repair that only takes into account the relationship between $X$ and $Y$ does not impact the other direct relationships in the dataset. For the results of Algorithm \ref{algo:smi} (center), the Hellinger distance falls below the expected variation for both datasets, while the position-weighted Kendall's tau falls in the bottom half of the expected range of variation. We therefore conclude that Algorithm \ref{algo:smi} preserves both order and structure of real aggregated data at least as well as a bootstrapped sample, given these particular correlation structures.

\section{Interactions with Privacy}
\label{sec:priv}

In this section we explore how approaches in differential privacy interact with our bias-reduction algorithm. Our goal is to create an algorithm that interfaces easily with algorithm \ref{algo:smi} while also accounting for the fact that transportation datasets are often very sparse in the sense that there are often many combinations of attributes that do not appear in the real data, even while others are frequent (e.g. bus rides taken between neighborhoods that are not connected along a bus route). Accounting for sparsity is a key challenge, since common privacy techniques applied to  large, sparse domains can destroy utility \cite{cormode2011differentially}.

Fundamentally, aggregation techniques to ensure privacy are insufficient, especially in a transportation context. Montoye et al. showed that just four points of trajectory data are sufficient to uniquely identify most users \cite{demontjoye2013unique}.  Therefore, differential privacy techniques, where noise is added to prevent inference about individuals based on the principle of indistinguishability \cite{dwork2006differential} are preferred. 

\textbf{Preliminaries.}
Two database instances $D_1$ and $D_2$ are \textit{neighbors} if they differ by exactly one row. Next, we imagine some randomized function $q$ and a set of results $S \subseteq Range(q)$. 

\begin{definition}[Differential Privacy] \label{dpdef}
A randomized function $q$ is $\epsilon$-differentially private if for all neighboring database instances $D_1$ and $D_2$ and all $S \subseteq Range(q)$,
\begin{equation}
Pr[q(D_1) \in S] \leq e^{\epsilon} \times Pr[q(D_2) \in S]
\end{equation}
where the probability is taken over the randomness of $q$.
\end{definition}

Definition \ref{dpdef} guarantees that the two neighboring database instances are indistinguishable to within a factor of $e^{\epsilon}$ when presented with a result of the randomized query $q$. This mechanism $q$ addresses concerns that a participant might have about leakage of their information through the results of $q$. Even if the participant decided not to include their data in the dataset, no outputs of $q$ would become significantly more or less likely.

Real datasets like those in transportation use cases often run into the issue of sparse domains. For example, in the dockless bikeshare data the combination of all possible attribute values yields over 2.5 million possible distinct bins. Since many trips fall into the same bins, even with orders of magnitude more trips than bins we would expect to have a significant number of bins with no trips. By extension, this means that we need to be careful about how we reason about the bins outside of the active domain but still admissible as part of the global domain. We formalize this well-known problem as follows, and use the formalization to provide a proof that there is no way to avoid including bins from outside the active domain:

\begin{lemma}
\label{zeroprob}
Given a differentially private mechanism 
$q$ with respect to a set of database instances $\mathcal{D}$ of schema $R(\mathcal{A})$, if $P[q(D_i) = r] = 0$ for some database instance $D_i$ and a result $r$, then $P[q(D) = r] = 0$ for all $D \in \mathcal{D}$.
\end{lemma}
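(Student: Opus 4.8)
The plan is to exploit the \emph{multiplicative} nature of the differential privacy guarantee. Since $e^{\epsilon}$ is finite, the inequality $P[q(D_1)\in S] \le e^{\epsilon}\, P[q(D_2)\in S]$ cannot turn a zero probability into a positive one, so a single vanishing output probability is forced to propagate along the neighbor relation until it covers the entire family $\mathcal{D}$.

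First I would instantiate Definition \ref{dpdef} with the singleton set $S = \{r\}$, which is legitimate since $r \in Range(q)$ and $P[q(D) = r] = P[q(D) \in \{r\}]$. For any instance $D_j$ that is a neighbor of $D_i$, taking $D_1 = D_j$ and $D_2 = D_i$ yields
$$P[q(D_j) = r] \;\le\; e^{\epsilon}\, P[q(D_i) = r] \;=\; e^{\epsilon}\cdot 0 \;=\; 0,$$
so $P[q(D_j) = r] = 0$ for every neighbor $D_j$ of $D_i$. This is the analytic heart of the argument: a finite multiplicative gap is powerless to lift a zero off the ground.

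Next I would extend this from immediate neighbors to the whole family by induction. The key structural fact is that any two instances of the schema $R(\mathcal{A})$ can be joined by a finite chain $D_i = E_0, E_1, \dots, E_k = D$ in which each consecutive pair $(E_{t}, E_{t+1})$ differs by exactly one row, since one can transform any database into any other by a finite sequence of single-row edits. Inducting on $t$, the neighbor argument gives that $P[q(E_t)=r]=0$ implies $P[q(E_{t+1})=r]=0$, and walking the chain to $t = k$ establishes $P[q(D)=r]=0$ for the arbitrary target $D$.

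The main obstacle is not the one-line analytic step, which is immediate, but rather making the propagation rigorous: I must argue that $\mathcal{D}$ (understood as the set of all instances over the schema) is \emph{connected} under the ``differ by exactly one row'' relation, and I must be careful that the neighbor notion used in the induction chain coincides exactly with the one in Definition \ref{dpdef}. Once connectivity is in hand, induction along any path between $D_i$ and $D$ completes the proof; this also explains precisely why a pure $\epsilon$-differentially private mechanism cannot restrict its support to the active domain without losing the guarantee, motivating the sparse-domain treatment that follows.
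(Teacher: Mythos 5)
Your proposal is correct and follows essentially the same route as the paper: both arguments connect $D_i$ to an arbitrary $D$ by a finite chain of neighboring instances and propagate the zero probability along it, the only cosmetic difference being that you push the zero one step at a time by induction while the paper composes the factors into a single bound $e^{(n-1)\epsilon}\,P[q(D_i)=r]=0$. Your explicit instantiation of the definition with the singleton $S=\{r\}$ and your remark that connectivity of $\mathcal{D}$ under the single-row-edit relation must be assumed are both points the paper treats implicitly, so no gap separates the two proofs.
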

\begin{proof}
Let $P[q(D_i) = r] = 0$ for some database instance $D_i$ and result $r$. For any database instance $D_j \in \mathcal{D}$, define a sequence of database instances $S(D_j, D_i) = \{D_1, D_2, \ldots , D_n\}$ such that $D_1 = D_i$, $D_n = D_j$, and $D_k$ and $D_{k+1}$ are neighbors for all $0 \leq k < n$. \footnote{Two datasets are neighbors if they differ in the presence or absence of a single record, following the differential privacy definition.} It follows that
\begin{align*}
	P[q(D_j) = r] &\leq e^{\epsilon} \times P[q(D_2) = r] \\
    &\leq e^{2\epsilon} \times P[q(D_3) = r] \\
    &\vdots \\
    &\leq e^{(n-1)\epsilon} \times P[q(D_i) = r] \\
    &\leq 0
\end{align*}
Since the probability cannot be less than $0$, we find that $P[q(D) = r] = 0$ for all $D \in \mathcal{D}$
\end{proof}

\begin{theorem}
\label{noactive}
There does not exist a differentially private mechanism that \emph{only} returns elements from the active domain.
\end{theorem}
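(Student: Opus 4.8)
The plan is to argue by contradiction, using Lemma~\ref{zeroprob} as the main engine: restricting the outputs of a mechanism to the active domain forces certain output probabilities to vanish, and the lemma then propagates every such zero uniformly across all database instances. Suppose, for contradiction, that a mechanism $q$ exists that is $\epsilon$-differentially private with respect to $\mathcal{D}$ and whose support on any input $D$ is contained in the active domain $\mathrm{adom}(D)$ of $D$, i.e. the set of attribute-value combinations (bins) that actually occur in $D$.

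First I would fix an arbitrary element $r$ of the global domain $Dom(\mathcal{A})$ and exhibit a witness instance $D_r \in \mathcal{D}$ whose active domain excludes $r$ --- for example the empty instance, or any instance obtained by deleting every tuple matching $r$. Since $q$ returns only elements of the active domain, we immediately have $P[q(D_r) = r] = 0$. Applying Lemma~\ref{zeroprob} to this vanishing probability yields $P[q(D) = r] = 0$ for every $D \in \mathcal{D}$. Because $r$ was arbitrary, this shows that \emph{every} element of the global domain is output with probability zero by \emph{every} instance.

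The contradiction then comes from normalization. Take any single instance $D$. Its output $q(D)$ is, by assumption, a random element of the finite set $\mathrm{adom}(D)$, so summing over this support gives
\begin{equation*}
\sum_{r \in \mathrm{adom}(D)} P[q(D) = r] = 0,
\end{equation*}
which contradicts the requirement that $q(D)$ be a genuine probability distribution with total mass $1$. Hence no such mechanism can exist.

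The step I expect to be the main obstacle is guaranteeing that the witness instances $D_r$ genuinely lie in $\mathcal{D}$ and that $\mathcal{D}$ is connected under the neighbor relation, since Lemma~\ref{zeroprob} implicitly assumes a finite neighbor-chain between any two instances. If $\mathcal{D}$ were artificially constrained (say, to instances all sharing a fixed tuple), the witness construction could fail; the argument therefore relies on $\mathcal{D}$ being the full set of instances over $R(\mathcal{A})$, so that for each $r$ an $r$-free instance is both available and reachable. A secondary technical point is ensuring the support is finite so that the normalization sum is well defined --- this is immediate here, since each instance has finitely many tuples and hence a finite active domain.
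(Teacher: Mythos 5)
Your proof is correct and follows essentially the same route as the paper's: both arguments use Lemma~\ref{zeroprob} to propagate the zero probabilities forced by the active-domain restriction across all instances, and both conclude that no nontrivial output can ever be returned (the paper phrases this as the range being constrained to $\cap_{D \in \mathcal{D}} c(D) = \emptyset$, so the mechanism must be the trivial empty one, while you make the contradiction explicit via normalization of the output distribution). Your added attention to the existence of witness instances and the connectedness of $\mathcal{D}$ under the neighbor relation is a sound observation that the paper leaves implicit.
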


The intuition for Theorem \ref{noactive} is that every element in the global domain must have a non-zero chance of being included in the result, or else Lemma \ref{zeroprob} is violated.

\begin{proof}
Let $\mathcal{D}$ be the set of possible database instances. Let $c$ be a function such that $c(D)$ returns the active domain of $D$. Let $c^*$ be some differentially private mechanism that returns a value set $c^*(D) \subseteq c(D)$.  This means that $c^*$ can be probabilistic, but only the value sets in the powerset of $c(D)$ have non-zero probabilities. Next let us pick some $D_i \in \mathcal{D}$. Then $P[q(D_i) = r] = 0$ for all $r \not\subseteq c(D_i)$, and additionally by Lemma \ref{zeroprob} that $P[q(D) = r] = 0$ for all $D \in \mathcal{D}$. This holds true for all $c(D)$, and thus the possible range $R$ of $c^*(D)$ is constrained by $R = \cap_{D \in \mathcal{D}} c(D) = \emptyset$. Therefore, our mechanism $c^*$ must be the trivial one that returns the empty set.
\end{proof}

Theorem \ref{noactive} depends crucially on the choice of function $c$. If we allow this function to be defined on the global domain of the database instance $D$, we can in fact create a differentially private mechanism. This observation is the motivation behind the approach that we take in Algorithm \ref{algo:cat_hist}, which is closely related to the technique outlined by Cormode et al. \cite{cormode2011differentially}. Our approach differs in that we parameterize our algorithm by the \emph{tolerance for including values from outside of the active domain} rather than as a fixed number of such values.
This formulation is essentially a usability enhancement: users cannot necessarily provide the number of bins (or a threshold defining the number of bins) without inspecting the dataset, while a probability for including bins can be estimated globally based on the use case: A heatmap of trips may tolerate a few trips in unexpected places, while an analysis of maximum trip distance or other computations that are sensitive to outliers have a lower tolerance. We also add laplacian noise rather than geometric noise \cite{ghosh2012universally} for simplicity.

Our algorithm is differentially private in the context of a global domain, and controls the number of bins from outside the active domain by deriving a threshold from the user-provided tolerance $\rho$. Reducing $\rho$ will lead to choosing a higher threshold and therefore excluding more of the active domain, but any choice for $\rho$ fulfills the criteria for differential privacy.  In a sense, this formulation frames the question a one of data sufficiency: Given a tolerance for irrelevant bins, do you have enough data to include the most important bins in the histogram with high probability?  If not, you can still release the dataset (and retain differential privacy guarantees at the expense of utility), or you can potentially go and collect more data to produce higher bin counts and improve their likelihood of being included.

\begin{algorithm} 
	\DontPrintSemicolon
	\KwIn{Difference in sizes of global and active domains $n$, tolerance for values outside the active domain $\rho$, privacy budget $\epsilon$, and the true histogram ($C$, $S$) where $C$ is a vector of categories in the active domain and $S$ is a vector of true frequencies for each corresponding category in $C$.}
	\KwOut{Differentially private histogram}
	
	$\tau \gets -\large\frac{\ln(2(1-\rho^{\frac{1}{n}}))}{\epsilon}$\\
	$i \gets 0$ \\
	\While{$i < |C|$}{
	    $s_i \gets $LaplaceDistribution$(s_i, \frac{1}{\epsilon})$ \\
      	\If{$s_i < \tau$}{Remove($c_i, s_i$)}
        $i \gets i+1$ 
	} 
    $k \gets $BinomialDistribution$(n, \frac{1}{2}e^{-\epsilon \tau})$ \\
    $j \gets 0$\\
	\While{$j < k$}{
      	Append($C$, GetCategoryFromDomain())\\
        Append($S$, $\tau + $ ExponentialDistribution$(\frac{1}{\epsilon})$)\\
      	$j \gets j+1$\\
    }
    \Return{$C,S$}
	
	\caption{Categorical Histogram Method}\label{algo:cat_hist}
\end{algorithm}

To see how this works in practice, we run algorithm \ref{algo:smi} with \ref{algo:cat_hist} for various values of $\rho$ and $\epsilon$. We can then compare these results to both the output of algorithm \ref{algo:smi}. A key finding is that for these datasets our algorithms can be applied in either order: privacy-first, or bias-first. In other words we can compose the bias reduction and noise injection steps in either order. However, the privacy guarantee is subtly different between these two cases: If we remove bias first, the distribution is a sample from the ``fair'' world where biases have been removed, which is different than the ``real'' world. Specifically, a record in the bias-corrected dataset does not necessarily correspond to any real individual.  Therefore, we present results from the privacy-first approach in order to make the privacy guarantee easier to interpret.


\begin{figure}
    \centering
    \includegraphics[width=2.3in]{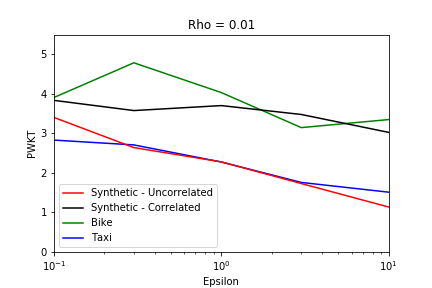}
    \includegraphics[width=2.3in]{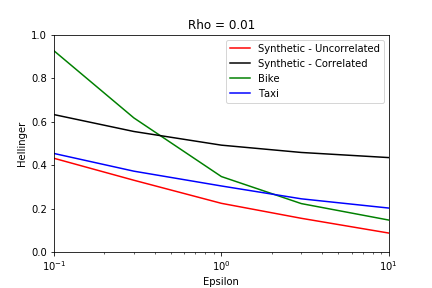}
    \includegraphics[width=2.3in]{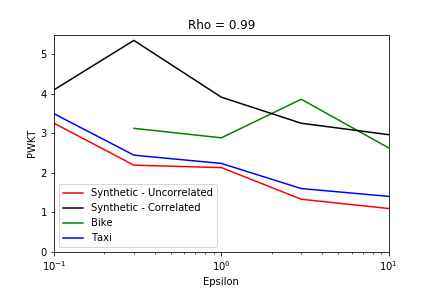}
    \includegraphics[width=2.3in]{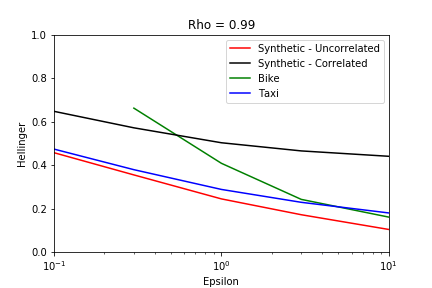}
    \caption{Distance metrics of results of composing algorithms \ref{algo:smi} and \ref{algo:cat_hist} over all four datasets for varying values of $\rho$ and $\epsilon$.}
    \label{fig:noise_results}
\end{figure}

Figure \ref{fig:noise_results} shows the results over the same four datasets and attributes that we investigated in section \ref{sec:experiments} for ease of direct comparison. For the same two evaluation metrics as in the previous section we show a plot of the metric vs $\epsilon$ on a logarithmic scale for a large and small value of $\rho$, where each data point used comes from the average over $10$ trials in order to account for the inherent randomness of the algorithm. 

For some very conservative parameter combinations, the derived threshold was higher than all buckets, leading to an empty result. For example, $\epsilon = 0.1$ and $\rho = 0.99$ in the Taxi dataset implies a minimum bin count of $245$. This threshold exceeded the support of any single bucket in the histogram, and our algorithm \ref{algo:cat_hist} returned no data for which to calculate distances. Empty results like these demonstrate how algorithm \ref{algo:cat_hist} yields differentially private results no matter what the input parameters are. In practice, the data owner could potentially choose to postpone data release until more data is collected, to increase the buckets' support and eventually produce non-empty output.

Next, we observe that the Hellinger distances (at right) are higher than those without noise, but converge to the noiseless values as the privacy budget $\epsilon$ is increased. The rate at which they converge is directly related to the size of the global domain. Since the two synthetic datasets were created using the same set of possible origin and destination neighborhoods as the bike data, the size of the global domains is similar across all three datasets. On the other hand, the taxi data is sampled from a much smaller global domain. Therefore, increasing $\epsilon$ from $0.1$ to $1$ reduces the Hellinger distance by about $0.1$ for the bike and synthetic datasets, but $0.5$ for the taxi dataset. For low values of $\epsilon$, the taxi results are dominated by bins included from the global domain, yielding very high distances.

However, PWKT (at left) behaves a bit less smoothly. There does still seem to be an overall decreasing trend as $\epsilon$ increases as we would expect, but this does not happen monotonically. The spikes that occur along the way come as a result of the linear decrease of the threshold chosen by \ref{algo:cat_hist} that comes with increasing $\epsilon$. In particular, since PWKT is rank-sensitive, the distance it calculates depends on both the number of ranked buckets as well as their order. Thus having a high threshold that yields only a few buckets for analysis will tend to result in a smaller distance value, even if the buckets that were preserved are not preserved in order. This causes an inflection point at a different $\epsilon$ value for each value of $\rho$, as eventually the distance-reducing effect of more accurate ordering counteracts the distance-increasing effect of including more buckets in the distance calculation. Domain size and these inflection points also help explain why the results from the taxi data seem to fail to converge to the no-noise case towards $\epsilon = 10$, even as the other three datasets do so.

\begin{figure}
    \centering
    \includegraphics[width=2.3in]{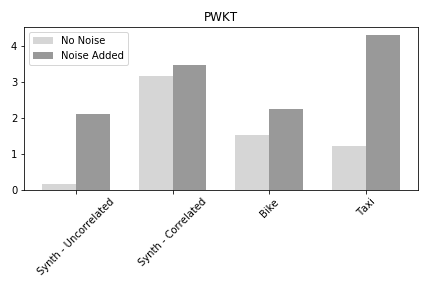}
    \includegraphics[width=2.3in]{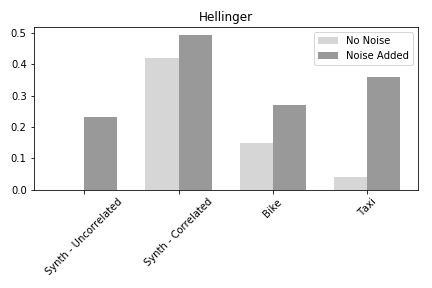}
    \caption{Comparison of the no noise case from section \ref{sec:experiments} (left) and noised results (right) for fixed $\rho = 0.5$ and $\epsilon = 1$.}
    \label{fig:noise_comparison}
\end{figure}

Figure \ref{fig:noise_comparison} compares the results of composing algorithms \ref{algo:smi} and \ref{algo:cat_hist} to the no noise case presented in section \ref{sec:experiments} for a reasonable choice of $\rho$ and $\epsilon$. Adding noise increases the distances across the board, and especially to the taxi data whose support is smaller relative to the global domain. However, we also observe that the increase in distance is larger when the distance due to bias correction was smaller. This result implies that the randomization induced by algorithm \ref{algo:cat_hist} overlaps with the bias reduction of algorithm \ref{algo:smi}. The relationship between these algorithms hints at a more fundamental relationship between the processes of bias reduction and privacy preservation, which is a potential direction for future work.

\section{Conclusions and Future Work}
\label{sec:conc}

Data sharing is emerging as a critical bottleneck in urban and social computing.  While risks associated with privacy have been well-studied, data owners and data publishers must also be selective about the patterns they reveal in shared data. Biases in the underlying data can be reinforced and amplified when used to train models, leading to not only poor quality results but also potentially illegal discrimination against protected groups, causing a breach of trust between government and companies.

In this paper, we have considered the bias-correction problem --- an important pre-processing step in releasing data that is orthogonal to privacy.

We interpret the need to repair unintended or unrepresentative relationships between variables prior to data release as related to causal inference: the conditional mutual information between two variables is a measure of the strength of the relationship.  We propose an algorithm that interprets the frequencies of trip events as a probability distribution, then manipulates this distribution to eliminate the unwanted causal relationship while preserving the other relationships.

We show that this procedure produces expected behavior for synthetic datasets representing extreme cases, and has only a modest impact in real datasets: the distance between the original data and the adjusted data falls within the bounds of natural variation of the original data itself. Additionally, we present a method to adjust our approach to meet the standards of differential privacy - a crucial step for the adjustment and release of any real-world dataset.

Going forward, we aim to generalize this approach to other domains, distinguish between direct and indirect causal effects, and explore new algorithms that can better balance the tradeoff between utility and causal relationships. We also hope to investigate the ways that this approach interacts with privacy-preserving methods in more detail, including direct overlap that could be leveraged to decrease the amount of noise added as well as formalizing what it might mean to preserve privacy after the bias reduction step. Our broader vision is to develop a new kind of open data system that can spur data science research by generating safe and useful synthetic datasets on demand for specific scenarios, using real data as input.

\bibliographystyle{abbrv}
\bibliography{refs,bigdata}  

\end{document}